\DeclareFontFamily{OT1}{pzc}{}
\DeclareFontShape{OT1}{pzc}{m}{it}{<-> s * [1.10] pzcmi7t}{}
\DeclareMathAlphabet{\mathcalx}{OT1}{pzc}{m}{it}
\renewcommand{\inf}[1]{\textsc{#1}}
\DeclareMathAlphabet{\mathpzc}{OT1}{pzc}{m}{it}
\newcommand{\tuple}[1]{ \overline{#1} }
\newcommand{\fosubterm}[2]{#1 [ #2 ]}
\newcommand{\eq}{\approx}
\newcommand{\noteq}{\not\eq}
\newcommand{\eqOrNorEq}{\, \dot{\eq} \,}
\newcommand{\constraints}{\mathpzc{CS}}
\newcommand{\infsys}{\mathit{Inf}}
\newcommand{\ginfsys}{\mathit{GInf}}
\newcommand{\ginfgsel}{\mathit{GInf}^{\gsel}}
\newcommand{\ggsel}{\G^{gsel}}
\newcommand{\concl}{\mathit{concl}}
\newcommand{\prems}{\mathit{prems}}
\newcommand{\mprem}{\mathit{mprem}}
\newcommand{\universes}{\mathcal{U}}
\newcommand{\interpfun}{\mathcal{I}}
\newcommand{\model}{\mathpzc{M}}
\newcommand{\denotes}[3]{\llbracket #1 \rrbracket^{#2}_{#3} }
\newcommand{\holdsin}[2]{#1 \models #2}
\newcommand{\issubsumedeq}{\mathrel{\raisebox{-0.2pt}{\Large\rlap{\kern0.5pt$\cdot$}}}\geq}
\newcommand{\subsumeseq}{\mathrel{\raisebox{-0.2pt}{\Large\rlap{\kern3.2pt$\cdot$}}}\leq}
\newcommand{\issubsumed}{\mathrel{\raisebox{-1.1pt}{\Large\rlap{\kern0.5pt$\cdot$}}}>}
\let\oldsqsubset\sqsubset
\renewcommand{\sqsubset}{%
  \mathrel{\raisebox{1pt}{$\oldsqsubset$}}%
}
\newcommand{\G}{\mathpzc{G}}
\newcommand{\GSigmaN}{\mathpzc{G}(N)} 
\newcommand{\GSigma}[1]{\mathpzc{G}(#1)} 
\newcommand{\redclauses}{\mathit{Red}_{Cl}(N)}
\newcommand{\redc}{\mathit{Red}_{Cl}}
\newcommand{\gredclauses}{\mathit{GRed}_{Cl}(N)}
\newcommand{\gredc}{\mathit{GRed}_{Cl}}
\newcommand{\redinfs}{\mathit{Red}_I(N)}
\newcommand{\redi}{\mathit{Red}_I}
\newcommand{\gredinfs}{\mathit{GRed}^{\gsel}_I(N)}
\newcommand{\gredi}{\mathit{GRed}^{\gsel}_I}
\newcommand{\sel}{\mathit{sel}}
\newcommand{\gsel}{\mathit{gsel}}
\newcommand{\Ginv}{\mathpzc{G}^{-1}}
\newcommand{\orr}{\, \vee \,}
\newtcolorbox{conventionbox}[1]{colback=blue!3!white,colframe=blue!100!black,fonttitle=\bfseries,title=#1}
\newenvironment{scprooftree}[1]%
  {\gdef\scalefactor{#1}\begin{center}\proofSkipAmount \leavevmode}%
  {\scalebox{\scalefactor}{\DisplayProof}\proofSkipAmount \end{center} }
\def\orcidID#1{\href{http://orcid.org/#1}{\raisebox{-1.25pt}{\includegraphics{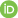}}}}
\title{Superposition with Delayed Unification}
\titlerunning{Superposition with Delayed Unification}
\authorrunning{Bhayat, Schoisswohl and Rawson}
\author{
Ahmed Bhayat \inst{1}\orcidID{0000-0002-1343-5084}
\and
Johannes Schoisswohl \inst{2}\orcidID{0000-0001-5550-196X}
\and
Michael Rawson \inst{2}\orcidID{0000-0001-7834-1567}
}
\institute{
University of Manchester, Manchester, UK
\email{ahmed.bhayat@manchester.ac.uk}
\and
TU Wien, Vienna, AT \\
\email{\{michael.rawson,johannes.schoisswohl\}@tuwien.ac.at}
}
\begin{document}
\maketitle
\begin{abstract}
Classically, in saturation-based proof systems, unification has been considered atomic.
However, it is also possible to move unification to the calculus level, turning the steps of the unification algorithm into inferences. 
For calculi that rely on unification procedures returning large or even infinite
sets of unifiers, integrating unification into the calculus is an attractive method of dovetailing unification and
inference. This applies, for example, to AC-superposition and higher-order superposition. We show
that first-order superposition remains complete when moving unification rules to the calculus level. We discuss some
of the benefits this has even for standard first-order superposition and provide an experimental evaluation.

\end{abstract}
\section{Introduction}

Unification is a key feature in many proof calculi, particularly those based on the saturation framework.
It acts as a filter, reducing the number of inferences that need to be carried out by instantiating terms only to the degree necessary.
However, many unification algorithms have large time complexities and produce large, or even infinite, sets of unifiers.
This is the case, for example, for AC-unification, which can produce a doubly exponential number of unifiers \cite{domenjoud-1992-technical}, and higher-order unification, which can produce an infinite set of unifiers \cite{snyder-1989-higher}.
This motivates the study of how unification rules can be integrated into proof calculi to allow them to dovetail with standard calculus rules.
One way to achieve this is to use the concept of unification with abstraction \cite{reger-unif-with-abstarction-2018,schoisswohl-2023-ALASCA}.
The general idea is that during the unification process, instead of solving all unification pairs, certain pairs are retained and added to the conclusion of an inference as negative \emph{constraint} literals.
Calculus-level unification inferences then work on such literals to solve these constraints and remove the literals in the case they are unifiable. 
Note how this differs from constrained resolution-style calculi such as \cite{bachmair-1992-basic,nieuwenhuis-1992-basic-complete} where the constraints are completely separate from the rest of the clause and are not subject to inferences.

To demonstrate the idea of dedicated unification inferences in combination with unification with abstraction, we provide the following example.

\[ C_1 = f(g(a,x)) \noteq t \qquad C_2 = f(g(a,b)) \eq t \]

\noindent
A standard superposition calculus would proceed by unifying $f(g(a,b))$ and $f(g(a,x)$ with the unifier $\sigma = \{x \rightarrow b \}$ and then rewriting $C_1$ with $C_2$ to derive $t\sigma \noteq t\sigma$. 
Equality resolution on $t\sigma \noteq t\sigma$ would then derive $\bot$. 
It is also possible to proceed by rewriting $C_1$ with $C_2$ \emph{without} computing $\sigma$ and instead add the constraint literal $g(a,x) \noteq g(a,b)$ to the conclusion to derive $t \noteq t \orr g(a,x) \noteq g(a,b)$. 
A dedicated unification inference could then decompose the constraint literal resulting in $t \noteq t \orr a \noteq a \orr b \noteq x$. 
Further unification inferences could bind $x$ to $b$, and remove the trivial pairs $a \noteq a$ and $t \noteq t$ to derive $\bot$. 

In this paper, we investigate moving unification to the calculus level for standard first-order superposition.
Whilst this may seem like a regressive step, as we lose much of unification's power to act as a filter on inferences and hence produce many more clauses, we think the investigation is valuable for two reasons. 

Firstly, by showing how syntactic first-order unification can be lifted to the calculus level, we provide a roadmap for how more complex unification problems can be lifted to the calculus level.
This may prove particularly useful in the higher-order case, where abstraction may expose terms to standard calculus rules that were unavailable before.
Moreover, we note that in our calculus we do not turn the entire unification problem into a constraint, but rather a subproblem. Whilst this may be merely an interesting detail for first-order unification, for more complex unification problems, such a method could be used to eagerly solve simple unification subproblems whilst delaying complex subproblems by adding them as constraints.

Secondly, one of the most expensive operations in first-order theorem provers is the maintenance of indices. 
Indices are crucial to the performance of modern solvers, as they facilitate the efficient retrieval of terms unifiable or matchable with a query term.
However, solvers typically spend a large amount of time inserting and removing terms from indices as well as unifying against terms in the indices.
This is particularly the case in the presence of the AVATAR architecture \cite{voronkov-2014-avatar} wherein a change in the model can trigger the insertion and removal of thousands of terms from various indices.
By moving unification to the calculus level, we can replace complex indices with simple hash maps, since to trigger an inference we merely need to check for top symbol equality and not unifiability. Insertion and deletion become $O(1)$ time operations.
However, for first-order logic, we do not expect the time gained to offset the downsides of extra inferences carried out and extra clauses created.
Our experimental results back up this hypothesis (see Section \ref{sec:results}).
Our main contributions are:

\begin{itemize}
\item Designing a modified superposition calculus that moves unification to the calculus level (Section \ref{sec:calculus}).
\item Proving the calculus to be statically and dynamically refutationally complete (Section \ref{sec:completeness_proof}).
\item Providing a thorough empirical evaluation of the calculus (Section \ref{sec:results}).
\end{itemize}

\section{Preliminaries}
\label{sec:tech_background}

\paragraph{Syntax}
We consider standard monomorphic first-order logic with equality. 
We assume a signature consisting of a finite set of (monomorphically) typed function symbols and a single predicate, equality, denoted by $\eq$.
A non-equality atom $A$ can be expressed using equality as $A \eq \top$ where $\top$ is a special function symbol \cite{schulz-2002-braniac}.
Terms are formed in the normal way from variables and function symbols.
We commonly use $s$, $t$ or $u$ or their primed variants to refer to terms.
We write $s : \tau$ to show that term $s$ has type $\tau$.
A term is ground if it contains no variables.
We use the notation $\tuple{s}_n$ to refer to a tuple or list of terms of length $n$.
More generally, we use the over bar notation to refer to tuples and lists of various objects.
Where the length of the tuple or list is not relevant, we drop the subscript.
By $s_i$ we denote the $i$th element of the tuple $\tuple{s}_n$.
Literals are positive or negative equalities written as $s \eq t$ and $s \noteq t$ respectively.
We use $s \eqOrNorEq t$ to refer to either a positive or a negative equality.
Clauses are multisets of literals. 
A clause that contains no literals is known as the empty clause and denoted by $\bot$.

A substitution is a mapping from variables to terms.
We assume, w.l.o.g.,  that all substitutions are idempotent.
We commonly denote substitutions using $\sigma$ and $\theta$ and denote the application of a substitution $\sigma$ to a term $s$
by $s\sigma$. 
A substitution $\theta$ is grounding for a term $s$, if $s\theta$ is ground.
The definition of grounding substitution can be extended to literals and clauses in the obvious manner.
A substitution $\sigma$ is a unifier of terms $s$ and $t$ if $s\sigma = t\sigma$.
A unifier $\sigma$ is more general than a unifier $\sigma'$ if there exists a substitution $\rho$ such that $\sigma\rho = \sigma'$.
With respect to syntactic first-order unification, if two terms are unifiable then they have a single most general unifier 
up to variable naming \cite{handbook-unification-theory}.

A transitive irreflexive relation over terms is known as an ordering.
The superposition calculus we present below is, as usual, parameterised by a simplification ordering on ground terms.
An ordering $\succ$ is a simplification ordering, if it possesses the following properties.
It is total on ground terms.
It is compatible with contexts, meaning that if $s \succ t$, then $\fosubterm{u}{s} \succ \fosubterm{u}{t}$.
It is well-founded.
Note that every simplification ordering has the subterm property. 
Namely, that if $t$ is a proper subterm of $s$, then $s \succ t$. 
For non-ground terms, the only property that is required of the ordering is that it is stable under substitution.
That is, if $s \succ t$ then for all substitutions $\sigma$, $s\sigma \succ t\sigma$.
We extend the ordering $\succ$ to literals in the standard fashion via its multiset extension. 
A positive literal $s \eq s'$ is treated as the multiset $\{s,s'\}$, whilst a negative literal $s \noteq s'$ is treated as the multiset $\{s,s,s',s' \}$.         
The ordering is extended to clauses by its two-fold multiset extension.
We use $\succ$ to denote the ordering on terms and its multiset extensions to literals and clauses.
\paragraph{Semantics}

An interpretation is a pair $(\universes, \interpfun)$, where $\universes$ is a set of typed universes and $\interpfun$ is 
an interpretation function, such that for each function symbol $f : \tau_1 \times \cdots \times \tau_n \rightarrow \tau$ in the signature, $\interpfun(f)$ is a concrete function of type $\universes_{\tau_1} \times \cdots \times \universes_{\tau_n} \rightarrow \universes_{\tau}$.
A valuation $\xi$ is a function that maps each variable $x : \tau$ to a member of $\universes_\tau$. 
For a given interpretation $\model$ and valuation $\xi$, we uses $\denotes{t}{\xi}{\model}$ to represent the denotation of $t$ in $\model$ given $\xi$. 
A positive literal $s \eq t$ is true in an interpretation $\model$ for valuation $\xi$ if $\denotes{s}{\xi}{\model} = \denotes{t}{\xi}{\model}$ and false otherwise.
A negative literal $s \noteq t$ is true in an interpretation $\model$ for valuation $\xi$ if $s \eq t$ is false.
A clause $C$ holds in an interpretation $\model$ for valuation $\xi$ if one of its literals is true in $\model$ for $\xi$.
An interpretation $\model$ \emph{models} a clause $C$ if $C$ holds in $\model$ for every valuation.
An interpretation models a clause set, if it models every clause in the set.
A set of clauses $M$ entails a set of clauses $N$, denoted $\holdsin{M}{N}$, if every model of $M$ is also a model of $N$. 
 
\needspace{3\baselineskip}
\section{Calculus}
\label{sec:calculus}

Intuitively, what we are aiming for with our calculus, is that whenever standard superposition applies a substitution $\sigma$ to a conclusion with the side condition 
``$\sigma$ is a unifier of terms $t_1$ and $t_2$'', our calculus adds a constraint $t_1 \noteq t_2$ to the conclusion.
The calculus then has further inference rules that mimic the steps of a first-order unification algorithm and work on negative literals.
Our presentation below does not quite follow this intuition. 
Instead, if the unification problem is trivial we solve it immediately.
If it is non-trivial, we carry out a single step of unification and add the resulting sub-problems as constraints.
Our reasons for doing this are two-fold.

\begin{enumerate}
\item Adding the entire unification problem $t_1 \noteq t_2$ as a constraint can lead to a constraint literal that is larger, with respect to $\succ$, than any literal occurring in the premises. This causes difficulties in the completeness proof.
\item More pertinently, keeping in mind our planned applications to more complex logics, we wish to show that delayed unification remains complete even when only selected sub-problems of the original unification problem are added as constraints. In the context of higher-order logic, for example, this could allow for the eager solving of simple unification sub-problems whilst only the most difficult are added as constraints. See Section \ref{sec:extension} for further details.
\end{enumerate}

Wherever we present a clause as a subclause $C'$ and a literal $l$ (e.g. $C' \vee l$), we denote the entire clause by the same name as the subclause without the dash (e.g. we refer to the clause $C' \vee l$ by $C$).
As in the classical superposition calculus, our calculus is parameterised by a \emph{selection function} that is used to restrict the number of applicable inferences in order to avoid the search space growing unnecessarily.
A selection function $sel$ is a function that maps a clause to a subset of its negative literals. 
We say that literal $l$ is $\sigma$-\emph{eligible} in a clause $C' \lor l$ if it is selected in $C$ ($l \in sel(C)$), or there are no selected literals and $l\sigma$ is maximal in $C\sigma$. 
Strict $\sigma$-eligibility is defined in a like fashion, with maximality replaced by strict maximality. Where $\sigma$ is empty, we sometimes speak of eligibility instead of $\sigma$-eligibility. In what follows, $\constraints$ is a multiset of literals that we refer to as \emph{constraints}.

\begin{center} 
\begin{minipage}[b]{.48\textwidth}

    \begin{scprooftree}{1}
        \def\defaultHypSeparation{\hskip .1in}    
        \AxiomC{$D' \vee f(\tuple{t}_n) \eq t'$}
        \AxiomC{$C' \vee \fosubterm{s}{f(\tuple{s}_n)} \eqOrNorEq s'$}
        \RightLabel{\inf{Sup}}
        \BinaryInfC{$C' \vee D' \vee \fosubterm{s}{t'} \eqOrNorEq s' \vee \constraints$}  
    \end{scprooftree}

  \end{minipage}
  \quad
  \begin{minipage}[b]{.48\textwidth}

    \begin{scprooftree}{1}
        \def\defaultHypSeparation{\hskip .1in}     
        \AxiomC{$D' \vee x \eq t'$}
        \AxiomC{$C' \vee \fosubterm{s}{f(\tuple{s}_n)} \eqOrNorEq s'$}
        \RightLabel{\inf{VSup}}
        \BinaryInfC{$(C' \vee D' \vee \fosubterm{s}{t'} \eqOrNorEq s') \sigma $}  
    \end{scprooftree}

  \end{minipage}
\end{center}

\noindent
where $\sigma = \{ x \rightarrow f(\tuple{s}_n) \}$, and $\constraints = t_1 \noteq s_1 \vee \ldots \vee t_n \noteq s_n$. 
Both rules share the following side conditions. 
Let $t$ stand for either $f(\tuple{t}_n)$ or $x$.
For \inf{Sup}, the substitution $\sigma$ mentioned in the side conditions is of course empty.

\begin{itemize}
\item $t \eq t'$ is strictly $\sigma$-eligible.
\item $\fosubterm{s}{f(\tuple{s}_n)} \eqOrNorEq s'$ is strictly $\sigma$-eligible if positive and $\sigma$-eligible if negative.
\item $t\sigma \not\preceq t'\sigma$ and $\fosubterm{s}{f(\tuple{s}_n)}\sigma \not\preceq s'\sigma$.
\item $C\sigma  \not\preceq D\sigma$
\end{itemize}

\begin{center}
  \begin{minipage}[b]{.48\textwidth}

    \begin{scprooftree}{1}
        \AxiomC{$C' \vee f(\tuple{t}_n) \eq v' \vee f(\tuple{s}_n) \eq v$}
        \RightLabel{\inf{EqFact}}
        \UnaryInfC{$C' \vee v \noteq v' \vee f(\tuple{s}_n) \eq v \vee \constraints $}  
        \singleLine
    \end{scprooftree}

  \end{minipage}
  \quad
  \begin{minipage}[b]{.48\textwidth}

    \begin{scprooftree}{1}
        \AxiomC{$C' \vee u' \eq v' \vee u \eq v$}
        \RightLabel{\inf{VEqFact}}
        \UnaryInfC{$(C' \vee v \noteq v' \vee u \eq v)\sigma $}  
        \singleLine
    \end{scprooftree}

  \end{minipage}
\end{center}

\noindent
for \inf{EqFact}, $\constraints = t_1 \noteq s_1 \vee \ldots \vee t_n \noteq s_n$. For \inf{VEqFact}, either $u$ or $u'$ must be a variable and $\sigma$ is the most general unifier of $u$ and $u'$. 
The side conditions for \inf{EqFact} are:
 
 \begin{itemize}
 \item $f(\tuple{s}_n) \eq v$ be eligible in $C$.
 \item $f(\tuple{s}_n) \not\preceq v$ and $f(\tuple{t}_n) \not\preceq v'$.
 \end{itemize}

\noindent
The side conditions for \inf{VEqFact} are:
 
 \begin{itemize}
 \item $u \eq v$ be $\sigma$-eligible in $C$.
 \item $u\sigma \not\preceq v\sigma$ and $u'\sigma \not\preceq v'\sigma$.
 \end{itemize} 
 
The calculus also contains the following resolution / unification inferences. We refer to these as unification inferences, because each inference represents carrying out a single step of the well-known Robinson unification algorithm \cite{hoder-2009-comparing}.

     \begin{scprooftree}{1}
        \AxiomC{$C' \vee f(\tuple{s}_n) \noteq f(\tuple{t}_n)$}
        \RightLabel{\inf{Decompose}}
        \UnaryInfC{$C' \vee \constraints$}  
        \singleLine
    \end{scprooftree}

\begin{multicols}{2}
    \begin{scprooftree}{1}
        \AxiomC{$C' \vee x \noteq t$}
        \RightLabel{\inf{Bind}}
        \UnaryInfC{$C'\sigma $}  
        \singleLine
    \end{scprooftree}

    \begin{scprooftree}{1}
        \AxiomC{$C' \vee s \noteq s$}
        \RightLabel{\inf{ReflDel}}
        \UnaryInfC{$C'$}  
        \singleLine
    \end{scprooftree}
\end{multicols}

\noindent
where for \inf{Bind}, $\sigma = \{x \rightarrow t\}$ and $x$ does not occur in $t$. For \inf{Decompose}, $f(\tuple{s}_n) \not = f(\tuple{t}_n)$ and $\constraints = t_1 \noteq s_1 \vee \ldots \vee t_n \noteq s_n$. All three inferences require that the final literal be $\sigma$-eligible in $C\sigma$ (for \inf{Decompose} and \inf{ReflDel}, $\sigma$ is empty). We provide some examples to show how the calculus works.

\begin{example}
Consider the unsatisfiable clause set:

\[ C_1 =  f(x, g(x)) \noteq t \qquad C_2 = f(g(b),y) \eq t \]

\noindent

A \inf{Sup} inference between $C_1$ and $C_2$ results in clause $C_3 = t \noteq t \vee x \noteq g(b) \vee g(x) \noteq y$. 
A  \inf{ReflDel} inference on $C_3$ results in the clause $C_4 = x \noteq g(b) \vee g(x) \noteq y$.
An application of \inf{Bind} on $C_4$ with $\sigma = \{ x \rightarrow g(b) \}$ results in $C_5 = g(g(b)) \noteq y$.
Another application of \inf{Bind}, then leads to $\bot$.
\end{example}

\begin{example}
Consider the unsatisfiable clause set:

\[ C_1 = x \eq c \qquad C_2 = f(a,c) \noteq t \qquad C_3 = f(c,c) \eq t \]

\noindent

A \inf{VSup} inference between $C_1$ and $C_2$ results in clause $C_4 = f(c,c) \noteq t$. 
A \inf{Sup} inference between $C_3$ and $C_4$ results in the clause $C_5 = t \noteq t \vee c \noteq c \vee c \noteq c$.
A triple application of \inf{ReflDel} starting from $C_5$ derives $\bot$.

\end{example}

\begin{note}
We abuse terminology and use \emph{inference} and \emph{inference rule} to refer both to schemas such as shown above, 
as well as concrete instances of such schemas. Given an inference $\iota$, we refer to the tuple of its premises by $\prems(\iota)$,
  to its maximal premise by $\mprem(\iota)$, and to its conclusion by $\concl(\iota)$.
\end{note}

\section{Redundancy Criterion}

We utilise Waldmann et al.'s framework \cite{waldmann-2020-comprehensive} for proving the completeness of our calculus. Hence, our redundancy criterion is based on their intersected lifted criterion. 
In instantiating the framework, we roughly follow Bentkamp et al. \cite{bentkamp-2021-lfhol-journal}.
Let the calculus defined above be referred to as $\infsys$. 
We introduce a ground inference system $\ginfsys$ that coincides with standard superposition \cite{bachmair-1994-rewrite}. That is, it contains the well known three inferences, \inf{Sup}, \inf{EqFact} and \inf{EqRes}. 
We refer to these inferences by \inf{GSup}, \inf{GEqFact} and \inf{GEqRes} to indicate that they are only applied to ground clauses.
Following the notation of the framework, we write $\infsys(N)$ ($\ginfsys(N)$) to denote the set of all $\infsys$ ($\ginfsys$) inferences with premises in a clause set $N$.
We introduce a grounding function $\G$ that maps terms, literals and clauses to the sets of their ground instances. For example, given a clause $C$, $\GSigma{C}$ is the set $\{ C\theta \; \vert \;  \theta \text{ is a grounding substitution} \}$. We extend the function $\G$ to clause sets by letting $\GSigmaN = \bigcup_{C \in N} \GSigma{C}$ where $N$ is a set of clauses. 

A ground clause $C$ is redundant with respect to a set of ground clauses $N$ if there are clauses $C_1,\ldots, C_n \in N$ such that for $1 \leq i \leq n$, $C_i \prec C$ and $\holdsin{C_1,\ldots, C_n}{C}$. 
The set of all ground clauses redundant with respect to a set of ground clauses $N$ is denoted $\gredclauses$.

A clause $C$ is redundant with respect to a set of clauses $N$, if for every $D \in \GSigma{C}$, $D$ is redundant with respect to $\GSigmaN$ or there is a clause $C' \in N$ such that $D \in \GSigma{C'}$ and $C \sqsupset C'$ where $\sqsupset$ is the strict subsumption relation. That is $C \sqsupset C'$ if $C$ is subsumed by $C'$, but $C'$ is not subsumed by $C$.
The set of all clauses redundant with respect a set of clauses $N$ is denoted $\redclauses$.

In order to define redundant inferences, we have to pay careful attention to selection functions. For non-ground clauses, we fix a selection function $sel$. We then let $\G(\sel)$ be a set of selection functions on ground clauses with the following property. For each $\gsel \in \G(\sel)$, for every ground clause $C$, there exists a clause $D$ such that $C \in \GSigma{D}$ and the literals selected in $C$ by $\gsel$ correspond to those selected in $D$ by $\sel$. We write $\ginfgsel$ to show that the ground inference system $\ginfsys$ is parameterised by the selection function $\gsel$. Let $\iota$ be an inference in $\infsys$. 
We extend the grounding function $\G$ to a family of grounding functions $\ggsel$ for each $\gsel \in \G(\sel)$. Each function $\ggsel$ maps terms, literals and clauses as above, and maps members of $\infsys$ to subsets of $\ginfgsel$ as follows.\footnote{When a grounding function $\ggsel$ acts on a clause, literal or term, we commonly drop the $\gsel$ superscript as the selection function plays no role in the grounding of these.}

\begin{definition}[Ground Instance of an Inference]
\label{def:inf_redundancy}
Let $\iota$ be of the form $C_1,\ldots, C_n \vdash E \orr \constraints$. An inference $\iota_g \in \ginfgsel$ is in $\ggsel(\iota)$ if it is of the form $C_1\theta,\ldots, C_n\theta \vdash E\theta$ for some grounding substitution $\theta$. In this case, we say that $\iota_g$ is the $\theta$-ground instance of $\iota$. Note that we ignore the constraints in the definition of ground instances. 
\end{definition}

A ground inference $C_1,\ldots,C_n,C \vdash E$ with maximal premise $C$ is redundant with respect to a clause set $N$ if for $1 \leq i \leq n$, $C_i \in \gredclauses$ or $C \in \gredclauses$ or there exist clauses $D_1, \ldots D_m \in N$ such that for $1 \leq i \leq m$, $D_i \prec C$ and $\holdsin{D_1,\ldots, D_m}{E}$. The set of all ground inferences redundant with respect to a set $N$ is denoted $\gredinfs$.

An inference $\iota$ is redundant with respect to a clause set $N$ if for every $\gsel \in \G(\sel)$ and for every $\iota' \in \ggsel(\iota)$, $\iota' \in \mathit{GRed^{\gsel}_I}(\GSigmaN)$. In words, every ground instance of the inference is redundant with respect to $\GSigmaN$. We denote the set of all redundant inferences with respect to a set $N$ as $\redinfs$. 

A clause set $N$ is saturated up to redundancy by an inference system $\infsys$ if every member of $\infsys(N)$ is redundant with respect to $N$.

\begin{note}
Given the definition of clause redundancy above, the \inf{ReflDel} inference can be utilised as a \emph{simplification} inference.
That is, the conclusion of the inference renders the premise redundant.
\end{note}

\section{Refutational Completeness}
\label{sec:completeness_proof}

To prove refutational completeness we utilise the above mentioned framework of Waldmann et al. \cite{waldmann-2020-comprehensive}. 
In particular, we use Theorem 14 from the paper to lift completeness from the ground level to the non-ground level. We bring Theorem 14 here
for clarity and to keep the paper self contained. We then present it in our notation. 
Let $\mathit{GRed} = (\gredi,\gredc)$ and $\mathit{Red} =  (\redi,\redc)$

\setcounter{theorem}{13}
\begin{theorem}[from Waldmann et al. \cite{waldmann-2020-comprehensive}] 
\label{thm:waldmann_lifting}
If $(\mathit{GInf^q},\mathit{Red^q})$ is statically refutationally complete w.r.t. $\models^q$ for every $q \in Q$
and if for every $N \subseteq \mathbf{F}$ that is saturated w.r.t. $\mathit{FInf}$ and $\mathit{Red}^{\cap \G}$ there exists a $q$ such that
$\mathit{GInf^q}(\G^q(N)) \subseteq \G^q(\mathit{FInf}(N)) \cup \mathit{Red^{q}_I}(\G^q(N))$, then 
$(\mathit{FInf}, \mathit{Red}^{\cap \G})$ is statically refutationally complete w.r.t. $\models^{\cap}_\G$.
\end{theorem}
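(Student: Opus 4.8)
Since Theorem~\ref{thm:waldmann_lifting} is quoted from Waldmann et al.'s abstract saturation framework, the proof I have in mind is the short bridging argument from that work: the plan is to reduce static refutational completeness of the non-ground system to that of the ground systems through the index $q$ supplied by the hypothesis, by showing that saturation lifts. First I would unfold the semantics: by the definition of the intersected consequence relation, $N_1 \models^{\cap}_{\G} N_2$ holds exactly when $\G^q(N_1) \models^q \G^q(N_2)$ for every $q \in Q$, so to prove $(\mathit{FInf}, \mathit{Red}^{\cap \G})$ statically refutationally complete w.r.t. $\models^{\cap}_{\G}$ it suffices to show that whenever $N \subseteq \mathbf{F}$ is saturated w.r.t. $\mathit{FInf}$ and $\mathit{Red}^{\cap \G}$ and $N \models^{\cap}_{\G} \{\bot\}$, then $\bot \in N$.

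The core step is to transfer saturation to the ground level. Fix such an $N$ and let $q$ be the index given by the hypothesis, so that $\mathit{GInf^q}(\G^q(N)) \subseteq \G^q(\mathit{FInf}(N)) \cup \mathit{Red^q_I}(\G^q(N))$. I claim $\G^q(N)$ is saturated w.r.t. $\mathit{GInf^q}$ and $\mathit{Red^q}$: take any ground inference $\iota_g \in \mathit{GInf^q}(\G^q(N))$; by the inclusion, either $\iota_g \in \mathit{Red^q_I}(\G^q(N))$, and we are done, or $\iota_g \in \G^q(\iota)$ for some $\iota \in \mathit{FInf}(N)$, in which case saturation of $N$ gives $\iota \in \mathit{Red}^{\cap \G}_I(N)$, which by the definition of the intersected lifted inference redundancy means $\G^{q'}(\iota) \subseteq \mathit{Red^{q'}_I}(\G^{q'}(N))$ for every $q'$, so in particular $\iota_g \in \mathit{Red^q_I}(\G^q(N))$. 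Either way $\iota_g$ is ground-redundant, so $\G^q(N)$ is saturated.

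Finally I would invoke ground completeness and transport $\bot$ back up: from $N \models^{\cap}_{\G} \{\bot\}$ we get $\G^q(N) \models^q \bot$; since $(\mathit{GInf^q}, \mathit{Red^q})$ is statically refutationally complete w.r.t. $\models^q$ and $\G^q(N)$ is saturated, $\bot \in \G^q(N)$; and because the only clause having $\bot$ among its ground instances is $\bot$ itself, $\bot \in N$, as required.

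The one genuinely delicate point is the interface in the core step between non-ground inference redundancy and ground redundancy of the instances. At this level of abstraction it is essentially definitional, but it relies on the redundancy criterion being the \emph{intersected} lifted one --- the intersection over all $q$, which in our instantiation means over all ground selection functions in $\G(\sel)$ compatible with $\sel$ --- and on each $\G^q$ being a genuine grounding function (monotone, sending $\bot$ to $\{\bot\}$, and well behaved on redundancy). For our paper the real work is therefore not this abstract implication but verifying that the concrete grounding functions $\ggsel$ of Definition~\ref{def:inf_redundancy} and the family $\G(\sel)$ meet these requirements and, above all, establishing the per-$q$ inclusion $\mathit{GInf^q}(\G^q(N)) \subseteq \G^q(\mathit{FInf}(N)) \cup \mathit{Red^q_I}(\G^q(N))$ for saturated $N$, which is where the calculus-specific completeness argument must do its job.
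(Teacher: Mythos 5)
Your reconstruction is correct and matches the standard argument in Waldmann et al.: transfer saturation to the ground level via the per-$q$ inclusion and the definition of the intersected lifted redundancy criterion, apply ground static completeness, and pull $\bot$ back up using properties (G1)/(G2) of grounding functions. Note that the paper itself does not prove this theorem --- it is imported verbatim from the framework --- and, as you correctly observe in your last paragraph, the paper's actual work lies in discharging its hypotheses (Lemmas~\ref{lem:grounding} and~\ref{lem:liftable}).
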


\setcounter{theorem}{13}
\begin{theorem}[from Waldmann et al. in our Notation] 
\label{thm:waldmann_lifting}
If $(\ginfgsel,\mathit{GRed})$ is statically refutationally complete w.r.t. $\models$ for every $\gsel \in \G(\sel)$
and if for every clause set $N$ that is saturated w.r.t. $\infsys$ and $\mathit{Red}$ there exists a $\gsel$ such that
$\ginfgsel(\ggsel(N)) \subseteq \ggsel(\infsys(N)) \cup \redi(\ggsel(N))$, then 
$(\infsys, \mathit{Red})$ is statically refutationally complete w.r.t. $\models_\G$.
\end{theorem}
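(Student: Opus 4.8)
We obtain the statement as a direct instance of Theorem~14 in the form taken verbatim from Waldmann et al., so the real task is to exhibit a correct dictionary between the two notations and to check that the objects introduced above meet the structural prerequisites of that framework. For the dictionary I would take $Q := \G(\sel)$, let $\mathbf{F}$ be the set of all clauses over our signature, set $\mathit{FInf} := \infsys$, and for each $q = \gsel \in Q$ identify $\mathit{GInf}^q$ with $\ginfgsel$, $\G^q$ with $\ggsel$, $\mathit{Red}^q$ with $\mathit{GRed} = (\gredi,\gredc)$, and $\models^q$ with entailment of ground clauses; finally $\mathit{Red}^{\cap \G}$ is $\mathit{Red} = (\redi,\redc)$ and $\models^{\cap}_\G$ is $\models_\G$. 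Under this reading the hypotheses and the conclusion of the statement are literal transcriptions of those of Theorem~14, so it remains only to discharge the framework's side conditions on these objects.

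First I would check that $(\gredi,\gredc)$ is a redundancy criterion for $(\ginfgsel,\models)$ in the sense required by Waldmann et al.: redundancy of clauses and of inferences is preserved under passing to supersets of $N$ and under adding redundant clauses to $N$, and the conclusion of a ground inference all of whose premises are present in $N$ is covered by clauses in $N$ below its maximal premise. Since $\ginfsys$ coincides with ground superposition and $\gredc, \gredi$ are exactly the standard Bachmair--Ganzinger ground redundancy notions, this is already available in the literature, and only the correspondence needs to be recorded.

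Next I would verify that $\G$, together with the family $\{\ggsel\}_{\gsel \in \G(\sel)}$, is a grounding function in the technical sense of the framework: $\G(\bot) = \{\bot\}$; for every clause $C$, $\bot \in \GSigma{C}$ forces $C = \bot$; and each $\ggsel$ sends an inference $\iota$ of the form $C_1,\ldots,C_n \vdash E \orr \constraints$ to a subset of $\ginfgsel$ in which, by Definition~\ref{def:inf_redundancy}, every element has conclusion $E\theta$ for some grounding $\theta$, the constraints $\constraints$ being dropped. I would also record that $\G(\sel)$ is non-empty and that, by its very definition, each $\gsel \in \G(\sel)$ is compatible with $\G$ in the way the framework demands of its $Q$-indexed ground selection functions. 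With all of this in place, Theorem~14 applies verbatim and yields the claim.

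The one point that needs genuine care is the treatment of the constraint literals in Definition~\ref{def:inf_redundancy}: I must make sure that discarding $\constraints$ when forming a ground instance of an inference really does yield a legitimate $\ginfgsel$-inference and does not disturb the redundancy bookkeeping the framework requires of a grounding function. The complementary observation --- that for a grounding $\theta$ falsifying every literal of $\constraints$ the full ground conclusion $E\theta \orr \constraints\theta$ is semantically equivalent to $E\theta$ --- is what will make the second hypothesis of the theorem (the inclusion $\ginfgsel(\ggsel(N)) \subseteq \ggsel(\infsys(N)) \cup \redi(\ggsel(N))$) attainable in the sections that follow, but it plays no role in the present statement.
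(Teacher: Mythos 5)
Your proposal follows the same route as the paper: the theorem is not proved from scratch but obtained by transcribing Theorem~14 of Waldmann et al.\ under the dictionary $Q = \G(\sel)$, $\mathit{FInf} = \infsys$, $\mathit{GInf}^q = \ginfgsel$, $\G^q = \ggsel$, $\mathit{Red}^q = (\gredi,\gredc)$, with the structural side conditions (that $(\gredi,\gredc)$ is a redundancy criterion and that each $\ggsel$ is a grounding function) discharged separately, exactly as the paper does in the surrounding lemmas. The one point you miss is that your identification of $\mathit{Red}^{\cap\G}$ with $(\redi,\redc)$ is \emph{not} exact, and asserting that the hypotheses are ``literal transcriptions'' of Theorem~14 glosses over a real mismatch: the definition of $\redc$ in this paper allows a clause $C$ to be declared redundant because some $C' \in N$ has $D \in \GSigma{C'}$ and $C \sqsupset C'$, i.e.\ it uses strict subsumption as a tiebreaker ordering, whereas the intersected lifted criterion $\mathit{Red}^{\cap\G}$ of Theorem~14 admits no such tiebreaker. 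As it stands, your transcription would only yield completeness for the weaker criterion obtained by deleting the subsumption clause from $\redc$. The repair is to invoke Theorem~16 of the framework paper, which extends Theorem~14 to redundancy criteria equipped with a well-founded tiebreaker ordering; with that substitution the rest of your argument goes through as written.
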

\setcounter{theorem}{0}

Thus, in our context, the set $Q$ is $\G(\sel)$, the ground inference system $\mathit{GInf^q}$ maps to $\ginfgsel$, the ground redundancy criterion
$\mathit{Red^q}$ maps to $( \gredi,\gredc)$ and the ground entailment relation $\models^q$ maps to standard entailment on first-order
clauses. 
Moreover, the non-ground inference system $\mathit{FInf}$ maps to $\infsys$ and the redundancy criterion $\mathit{Red}^{\cap \G}$
maps to $(\redi, \redc)$. Note, that this final mapping is not exact, as the criterion $\mathit{Red}^{\cap \G}$ does not allow for a tiebreaker ordering,
such as the strict subsumption relation, to be utilised in the definition of non-ground redundancy. However, this mismatch can easily be repaired
since Theorem 16 of the framework paper extends the result of Theorem 14 to the case where tiebreaker orderings are used. 

As our ground inference systems $\ginfgsel$ are ground superposition systems, static refutational completeness with respect to standard entailment
and standard redundancy is a famous result. See for example \cite{bachmair-1990-restrictions}. What remains for us to prove in order to apply Theorem 14 and show the static refutational completeness of $\infsys$, is:

\begin{enumerate}
\item For every $\gsel \in \G(\sel)$, the grounding function $\ggsel$ is a grounding function in the sense of the framework.
\item For every clause set $N$ saturated up to redundancy by $\infsys$, there exists a $\gsel \in \G(\sel)$ such that $\ginfgsel$ $(\G(N)) \subseteq \ggsel(\infsys(N)) \cup \gredi(\G(N))$.
In words, there exists a ground selection function such that every ground inference with that selection function and premises in $\G(N)$ 
is either the instance of a non-ground inferences with premises in $N$ or is redundant with respect to $\G(N)$.
\end{enumerate}

\begin{lemma}
\label{lem:grounding}
For every $\gsel \in \G(\sel)$, the grounding function $\ggsel$ is a grounding function in the sense of the framework.
\end{lemma}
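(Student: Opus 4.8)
\medskip
\noindent\textbf{Proof plan.}
To qualify as a grounding function in the sense of the framework, $\ggsel$ must satisfy three conditions: (i) $\G(\bot)$ is a non-empty set of ground empty clauses; (ii) if $\bot \in \G(C)$ then $C = \bot$; and (iii) for every $\iota \in \infsys$, $\ggsel(\iota) \subseteq \gredi(\G(\concl(\iota)))$, i.e.\ every ground instance of $\iota$ is redundant with respect to the ground instances of $\iota$'s conclusion. Conditions (i) and (ii) are immediate, since applying a substitution never changes the number of literals of a clause: $\G(\bot) = \{\bot\}$, and a clause with at least one literal has no empty ground instance. The whole content of the lemma is therefore (iii), which I would establish by a case analysis over the inference rules of $\infsys$.

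So fix $\iota \in \infsys$ and $\iota_g \in \ggsel(\iota)$. By Definition~\ref{def:inf_redundancy}, $\iota_g$ is a genuine inference of $\ginfgsel$ --- a ground \inf{GSup}, \inf{GEqFact}, or \inf{GEqRes} inference --- of the shape $\prems(\iota)\theta \vdash E\theta$, where $\concl(\iota) = E \orr \constraints$ and $\theta$ is grounding. Throughout I would lean on the standard ordering property of ground superposition: the conclusion of such an inference is strictly $\prec$-smaller than its maximal premise (the side conditions $C\sigma \not\preceq D\sigma$ etc.\ guarantee that the maximal premise is the expected one). For the rules whose conclusion carries no constraints --- \inf{VSup}, \inf{VEqFact}, \inf{Bind}, \inf{ReflDel} --- we have $\concl(\iota) = E$, so $\concl(\iota_g) = E\theta$ itself lies in $\G(\concl(\iota))$ and is $\prec$-below $\mprem(\iota_g)$; hence $\{E\theta\}$ witnesses redundancy of $\iota_g$ with respect to $\G(\concl(\iota))$.

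The substantial cases are \inf{Sup}, \inf{EqFact}, and \inf{Decompose}, where $\constraints = t_1 \noteq s_1 \orr \cdots \orr t_n \noteq s_n$ records the single unification step that was delayed. The crucial point I would make is that for $\iota_g$ to be a \emph{legal} ground inference, $\theta$ must already solve those pairs: a ground \inf{GSup} rewrites an occurrence of $f(\tuple{s}_n)\theta$ using the equation $f(\tuple{t}_n)\theta \eq t'\theta$; a ground \inf{GEqFact} superposes two literals sharing a left-hand side; and the ground \inf{GEqRes} arising from \inf{Decompose} resolves a trivial literal. In each case $f(\tuple{s}_n)\theta = f(\tuple{t}_n)\theta$, hence $s_i\theta = t_i\theta$ for all $i$, so $\constraints\theta$ is a disjunction of the false literals $t_i\theta \noteq t_i\theta$. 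Therefore the clause $E\theta \orr \constraints\theta$ --- which belongs to $\G(\concl(\iota))$ --- is logically equivalent to $E\theta = \concl(\iota_g)$, hence entails it. What is left is the ordering requirement $E\theta \orr \constraints\theta \prec \mprem(\iota_g)$: by the standard analysis the non-constraint part $E\theta$ is already strictly below the eligible literal of $\mprem(\iota_g)$, and each constraint literal $t_i\theta \noteq t_i\theta$ is strictly below it too, because $t_i\theta$ is a \emph{proper} subterm of the overlap term $f(\tuple{t}_n)\theta$, which occurs in that eligible literal; the subterm property finishes it.

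I expect this last ordering comparison to be the crux, and it is exactly what motivates the design decision discussed in Section~\ref{sec:calculus}: had we added the whole delayed problem $f(\tuple{s}_n) \noteq f(\tuple{t}_n)$ as a constraint rather than its one-step decomposition, the constraint literal's terms would not be proper subterms of anything occurring in the premises, and the conclusion could outsize the maximal premise, wrecking redundancy. Decomposing once before delaying keeps every constraint term strictly below the overlap term, which is precisely what makes the comparison succeed. Finally, since none of the above ever inspects the specific ground selection function --- it only uses that $\iota_g$ is a genuine ground (super)position inference --- the argument works uniformly for every $\gsel \in \G(\sel)$, as required.
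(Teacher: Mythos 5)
Your proposal is correct and follows essentially the same route as the paper's proof: dismiss (G1)--(G2) as immediate, then for (G3) take the single witness clause $\concl(\iota)\theta \in \G(\concl(\iota))$, observe that the legality of the ground inference forces $\theta$ to unify the overlap terms so the grounded constraints become trivially false literals (giving entailment of $\concl(\iota_g)$), and use the subterm property to place each constraint literal, and hence the whole witness, strictly below $\mprem(\iota_g)$. Your added remark spelling out why the constraint terms are \emph{proper} subterms of the overlap term makes the ordering step slightly more explicit than the paper's, but it is the same argument.
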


\begin{proof}
We need show that properties (G1) -- (G3) defined by Waldmann et al. hold for grounding functions. These properties are:

\begin{itemize}[leftmargin=35pt]
\item[(G1)] for every $\bot \in \mathbf{F}_{\bot}$, $\emptyset \not= \G(\bot) \subseteq \mathbf{G}_{\bot}$; 
\item[(G2)] for every $C \in \mathbf{F}$, if $\bot \in \G(C)$ and $\bot \in \mathbf(G)_{\bot}$ then $C \in \mathbf{F}_{\bot}$;
\item[(G3)] for every $\iota \in \mathit{FInf}$, if $\G(\iota) \not= \mathit{undef}$, then $\G(\iota) \subseteq \mathit{Red}_I(\G(\concl(\iota)))$.
\end{itemize}

As properties (G1) and (G2) relate to
the grounding of terms and clauses, and our grounding of these is fully standard we skip these. We prove (G3), which in our terminology is: for every $\iota \in \infsys$,
$\ggsel(\iota) \subseteq \gredi(\G(\mathit{concl}(\iota)))$. 
This can be achieved by showing that for every $\iota' \in \ggsel(\iota)$, there exist clauses $\tuple{C} \in \G(\concl(\iota))$ such that $\holdsin{\tuple{C}}{\concl(\iota')}$ and for each $C_i \in \tuple{C}$, $C_i \prec \mprem(\iota')$. In what follows, let $\theta$ be the substitution by which $\iota'$ is a grounding of $\iota$.

If $\constraints$ is the empty set in $\concl(\iota)$, then $\concl(\iota)\theta = \concl(\iota')$ and hence $\holdsin{\concl(\iota)\theta}{\concl(\iota')}$. Moreover, $\concl(\iota)\theta \in \G(\concl(\iota))$ and  $\concl(\iota)\theta \prec \mprem(\iota')$. 

On the other hand, if $\constraints$ is not empty, let $u = f(\tuple{t}_n)$ and $u' = f(\tuple{s}_n)$ be the two terms within $\prems(\iota)$ from which the constraints are created. 
By the existence of $\iota'$, we have that $u\theta = u'\theta$, and hence that $t_i\theta = s_i\theta$ for $1 \leq i \leq n$. 
Hence, every literal in $\constraints\theta$ has the form $t \noteq t$ and is trivially false in every interpretation. 
Thus, we still have $\holdsin{\concl(\iota)\theta}{\concl(\iota')}$.
Moreover, by the subterm property of the ordering $\succ$ we have that $t_i\theta \noteq s_i\theta$ is smaller than the maximal / selected literal of 
$\mprem(\iota')$ for $1 \leq i \leq n$ and hence that $\concl(\iota)\theta \prec \mprem(\iota')$. \qed
\end{proof}

\begin{lemma}
\label{lem:useful}
let $\sigma$ be the most general unifier of terms $s$ and $s'$, and $\theta$ be any unifier of the same terms. Then for any term $t$, $(t\sigma)\theta = t\theta$.
\end{lemma}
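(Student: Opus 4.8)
The plan is to reduce the statement to the defining property of a most general unifier together with idempotency of substitutions. Since $\sigma$ is an mgu of $s$ and $s'$ and $\theta$ is \emph{some} unifier of $s$ and $s'$, $\sigma$ is more general than $\theta$, so by the definition given in the Preliminaries there is a substitution $\rho$ with $\sigma\rho = \theta$. The only additional ingredient is that $\sigma$ is idempotent, i.e.\ $\sigma\sigma = \sigma$; this is guaranteed by the blanket w.l.o.g.\ assumption that all substitutions are idempotent (and Robinson's algorithm in any case returns idempotent mgus).

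With these two facts the argument is a one-line computation. The first step is to invoke the mgu property to obtain $\rho$ with $\sigma\rho=\theta$. The second is to substitute this for $\theta$ inside $(t\sigma)\theta$ and use associativity of substitution composition. The third is to collapse $\sigma\sigma$ to $\sigma$. Concretely, for any term $t$,
\[ (t\sigma)\theta = (t\sigma)(\sigma\rho) = t(\sigma\sigma\rho) = t(\sigma\rho) = t\theta, \]
where the middle equalities are associativity of composition followed by $\sigma\sigma=\sigma$.

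I do not expect a genuine obstacle: the lemma is essentially the remark that an idempotent mgu acts as the identity on terms it has already unified. The one point worth stating carefully is that ``$\sigma$ more general than $\theta$'' is taken to mean an honest equality of substitutions $\sigma\rho=\theta$ (not merely agreement on the variables of $s$ and $s'$), which is precisely the convention of the Preliminaries, and that the idempotency invoked is that of $\sigma$ specifically (idempotency of $\theta$ is not needed). As an alternative route one could avoid idempotency and instead prove the claim by induction on the structure of $t$, handling a variable $x$ by cases on whether $x$ is moved by $\sigma$ and lifting to compound terms; but the computation above is shorter and is the one I would present.
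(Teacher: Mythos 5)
Your proof is correct and follows essentially the same route as the paper's: obtain $\rho$ with $\sigma\rho=\theta$ from the mgu property, then collapse $(t\sigma)\sigma\rho$ to $t\sigma\rho$ by idempotency of $\sigma$. Your version is, if anything, slightly more explicit about where associativity and the identity $\sigma\sigma=\sigma$ are used.
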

\begin{proof}
Since $\sigma$ is the most general unifier, there must be a substitution $\rho$ such that $\sigma\rho = \theta$. Hence $(t\sigma)\theta = (t\sigma)\sigma\rho = t\sigma\rho = t\theta$ where the second to last step follows from the fact that $\sigma$ is idempotent. \qed
\end{proof}

\begin{lemma}
\label{lem:liftable}
For every clause set $N$ saturated by $\infsys$, there exists a $\gsel \in \G(\sel)$ such that $\ginfgsel$ $(\G(N)) \subseteq \ggsel(\infsys(N)) \cup \gredi(\G(N))$.
\end{lemma}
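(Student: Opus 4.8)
The plan is to run the standard ground-to-non-ground lifting argument for superposition, adapted to the delayed-unification rules. First I fix a suitable ground selection function $\gsel \in \G(\sel)$: for each ground clause $D \in \G(N)$, pick once and for all a clause $C_D \in N$ together with a grounding substitution $\theta_D$ such that $C_D\theta_D = D$, and let $\gsel(D)$ be the multiset of literals of $D$ that are $\theta_D$-images of the literals selected by $\sel$ in $C_D$; on ground clauses outside $\G(N)$ let $\gsel$ coincide with $\sel$. Then $\gsel \in \G(\sel)$ by construction, and each ground premise of an inference now carries a fixed non-ground ``witness'' in $N$ whose selection it mirrors. I then take an arbitrary $\iota_g \in \ginfgsel(\G(N))$, write its premises as $C_D\theta$ for the associated witnesses (renaming witnesses apart so that a single $\theta$ grounds all of them simultaneously), and split on the rule of $\iota_g$, which is one of \inf{GSup}, \inf{GEqFact}, \inf{GEqRes}.

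Consider \inf{GSup}; it rewrites a ground subterm $u$ to some $u' \prec u$ at a position $p$ within the into-literal, the from-equation being the $\theta$-instance of a literal $l \eq r$ of the from-witness, reoriented so that $l\theta = u$, $r\theta = u'$. If $p$ is a non-variable position of the into-witness, the into-witness has a subterm $f(\tuple s_n)$ there, and $l$ is either $f(\tuple t_n)$ with the same $f$ (when $l$ is non-variable, since $l\theta = f(\tuple s_n)\theta$) or a variable $x$. In the first case I apply \inf{Sup} to the two witnesses; in the second I apply \inf{VSup} with $\sigma = \{x \to f(\tuple s_n)\}$, which is the most general unifier of $x$ and $f(\tuple s_n)$ (the witnesses being variable-disjoint), so that $\theta$ factors as $\sigma\rho$. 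By Definition~\ref{def:inf_redundancy} the constraint literals are discarded when forming the ground instance, and Lemma~\ref{lem:useful} pushes $\sigma$ back into $\theta$, so the distinguished $\theta$-ground instance of this inference is precisely $\iota_g$; moreover it is a genuine $\infsys$-inference because the ordering side conditions lift: $\succ$ is stable under substitution, so any violation of $t\sigma \not\preceq t'\sigma$, of the into-side-versus-into-rhs condition, or of $C\sigma \not\preceq D\sigma$ would contradict the corresponding condition of $\iota_g$ after applying $\theta$, and strict (non-)maximality and eligibility over the total ground ordering pull back to the partial non-ground ordering (including the multiplicity requirement in strict maximality: a literal occurring twice in a witness occurs twice in its grounding). \inf{GEqFact} is handled identically, via \inf{EqFact}/\inf{VEqFact}. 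For \inf{GEqRes}, which resolves a ground literal $u \noteq u$ that is the $\theta$-instance of a literal $l \noteq r$ of the premise's witness (so $l\theta = r\theta = u$): our calculus has no equality-resolution rule, so I simulate the inference by a single step of the Robinson algorithm on $l \noteq r$ --- \inf{ReflDel} if $l = r$ syntactically, \inf{Bind} if $l$ or $r$ is a variable not occurring in the other, and \inf{Decompose} if both are non-variable (necessarily with the same top symbol, since $l\theta = r\theta$); the subterm property of $\succ$ rules out the remaining shape, where a variable occurs properly in the term it must be $\theta$-equal to.

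The remaining case for \inf{GSup} is that $p$ lies inside the image $y\theta$ of a variable $y$ of the into-witness $C_D$; here $\iota_g$ is redundant with respect to $\G(N)$. This is the classical variable-overlap argument: let $\theta'$ agree with $\theta$ except that $y\theta'$ is $y\theta$ with the relevant occurrence of $u$ rewritten to $u'$; then $C_D\theta' \in \G(N)$, and $C_D\theta' \prec C_D\theta$ since $u \succ u'$ and $\succ$ is compatible with contexts. In any interpretation, either the from-premise's literal $u \eq u'$ is false, so the rest of the from-premise --- which occurs in $\concl(\iota_g)$ --- is true, or $u$ and $u'$ are interpreted equally, in which case $C_D\theta'$ is interpreted exactly as $C_D\theta$ and hence its truth implies the conclusion of $\iota_g$; either way $\concl(\iota_g)$ is entailed by $C_D\theta'$ together with the from-premise, both of which are clauses in $\G(N)$ strictly below the maximal premise of $\iota_g$. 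Hence $\iota_g \in \gredi(\G(N))$.

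I expect the main difficulty to be bookkeeping rather than a new idea: keeping the selection functions coherent, so that the witness used to define $\gsel$ on a ground premise is exactly the premise of the lifted non-ground inference; carefully tracking the ordering side conditions through the interplay of the total ground ordering and the partial non-ground ordering, especially in the variable-overlap case; and --- most specific to this calculus --- verifying that splitting the unification problem loses nothing, i.e. that the whole constraint $\constraints$ of \inf{Sup}, \inf{VSup}, \inf{EqFact} and \inf{Decompose} becomes trivially false and is discarded in the ground instance (the observation already exploited in Lemma~\ref{lem:grounding}), and that every ground superposition or equality-resolution inference really is matched, including those whose overlapped or resolved term is a variable of a witness (handled by \inf{VSup}, \inf{VEqFact}, \inf{Bind}) and those whose equality-resolution analogue needs a \inf{Decompose} step.
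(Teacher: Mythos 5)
Your proposal is correct and follows essentially the same route as the paper: fix a ground selection function via a chosen non-ground witness per ground clause, lift \inf{GSup}/\inf{GEqFact}/\inf{GEqRes} case-by-case to \inf{Sup}/\inf{VSup}, \inf{EqFact}/\inf{VEqFact}, and \inf{ReflDel}/\inf{Bind}/\inf{Decompose} using Lemma~\ref{lem:useful} and the fact that constraints are discarded in ground instances, and dispose of the below-variable superposition case by the classical redundancy argument. The only differences are ones of detail: you spell out the variable-overlap redundancy argument and the occurs-check observation for \inf{Bind}, which the paper delegates to the standard literature or leaves implicit.
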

\begin{proof}
For every $D \in \G(N)$ there must exist a clause $C \in N$ such that $D \in \G(C)$. Let $\gg$ be an arbitrary well-founded ordering on clauses.
We let $C = \Ginv(D)$ denote the $\gg$-smallest clause such that $D \in \G(C)$. We then choose the $\gsel \in \G(\sel)$ that for a clause $D \in \G(N)$ selects the corresponding literals to those selected by $\sel$ in $\Ginv(D)$. 
Given this $\gsel$, we need to show that every inference with premises in $\G(N)$ is either the ground instance of an inference with premises in $N$,
or is redundant with respect to $\G(N)$.

A \inf{Sup} inference is redundant if the term $t$ replaced in the second premise occurs at or below a variable.
The proof is exactly the same as in the standard proof of the completeness of superposition \cite{bachmair-1994-rewrite}, so we don't repeat it.
All other inferences can be shown to be the ground instance of inferences from clauses in $N$.

Let $\iota \in \ginfgsel$ be the following \inf{GSup} inference with premises in $\G(N)$.

    \begin{scprooftree}{1}
        \AxiomC{$D'\theta \vee t\theta \eq t'\theta$}
        \AxiomC{$C'\theta \vee \fosubterm{s\theta}{t\theta} \eqOrNorEq s'\theta$}
        \RightLabel{}
        \BinaryInfC{$C'\theta \vee D'\theta \vee \fosubterm{s\theta}{t'\theta} \eqOrNorEq s'\theta$}  
    \end{scprooftree}

\noindent
where $\Ginv(D\theta) = D = D' \orr t \eq t'$, $\Ginv(C\theta) = C = C' \orr s \eqOrNorEq s'$ and $\iota$ fulfils all the side conditions of \inf{GSup}. 
Let $\sigma$ be any substitution. 
The literal $t\theta \eq t'\theta$ being strictly maximal in $D\theta$ implies that $t\sigma \eq t'\sigma$ is strictly maximal in $D\sigma$ due to the stability under substitution of $\succ$.
The literal $\fosubterm{s\theta}{t\theta} \eqOrNorEq s'\theta$ being (strictly) eligible in $C\theta$ with respect to $\gsel$ implies that $s\sigma \eq s'\sigma$ is strictly eligible in $C\sigma$ with respect to $\sel$. 
Let $p$ be the position of $t\theta$ within $s\theta$ and let $u$ be the subterm of $s$ at $p$.
Since the term $t\theta$ does not occur below a variable of $C$, such a position must exist. 
Moreover, $u$ cannot be a variable since if it was $t\theta$ would occur at a variable of $C$.
As $\theta$ is a unifier of $u$ and $t$, it must be the case that either $t$ is a variable, or $u$ and $t$ have the same top symbol.
Further, $D\theta \prec C\theta$ implies that $C\sigma  \not\preceq D\sigma$, 
$t\theta \succ t'\theta$ implies that $t\sigma  \not\preceq t'\sigma$, and
$\fosubterm{s\theta}{t'\theta} \succ s'\theta$ implies $s\sigma  \not\preceq s'\sigma$.
Thus, if $t$ is not a variable, there exists the following \inf{Sup} inference $\iota'$ from clauses $D$ and $C$.

    \begin{scprooftree}{1}
        \AxiomC{$D' \orr t \eq t'$}
        \AxiomC{$C' \orr \fosubterm{s}{u} \eqOrNorEq s'$}
        \RightLabel{}
        \BinaryInfC{$C' \orr D' \orr \fosubterm{s}{t'} \eqOrNorEq s' \orr \constraints$}  
    \end{scprooftree}

We have that $(C' \orr D' \orr \fosubterm{s}{t'} \eqOrNorEq s')\theta = \concl(\iota)$.
That is, the grounding of the conclusion of $\iota'$ less the constraint literals is equal to the conclusion of $\iota$.
Thus, $\iota$ is the $\theta$-ground instance of $\iota'$ as per Definition \ref{def:inf_redundancy} .
If $t$ is a variable $x$, then there exists the following \inf{VSup} inference $\iota'$ from clauses $D$ and $C$.

    \begin{scprooftree}{1}
        \AxiomC{$D' \orr x \eq t'$}
        \AxiomC{$C' \orr \fosubterm{s}{u} \eqOrNorEq s'$}
        \RightLabel{}
        \BinaryInfC{$(C' \orr D' \orr \fosubterm{s}{t'} \eqOrNorEq s')\sigma$}  
    \end{scprooftree}

Where $\sigma = \{ x \rightarrow u \}$ is the most general unifier of $t$ and $u$. Thus, we can use Lemma \ref{lem:useful} to show that $\concl(\iota')\theta = \concl(\iota)$ and again $\iota$ is the $\theta$-ground instance of $\iota'$. 

Let $\iota \in \ginfgsel$ be the following \inf{GEqFact} inference with premise in $\G(N)$.

    \begin{scprooftree}{1}
        \AxiomC{$C'\theta \orr u'\theta \eq v'\theta  \vee u\theta  \eq v\theta $}
        \RightLabel{}
        \UnaryInfC{$C'\theta \orr v\theta \noteq v'\theta \vee u\theta \eq v\theta $}  
        \singleLine
    \end{scprooftree}

\noindent
where $u'\theta = u\theta$, $\Ginv(C\theta) = C = C' \orr u' \eq v' \orr u \eq v$ and $\iota$ fulfils all the side conditions of \inf{GEqFact}.
Let $\sigma$ be any substitution. 
The literal $u\theta  \eq v\theta$ being maximal in $D\theta$ implies that $u\sigma  \eq v\sigma$ is maximal in $D\sigma$.
Since $\theta$ is a unifier of $u'$ and $u$, at least one of them must be a variable, or they must share a top symbol.
Moreover, $u\theta  \succ v\theta$ implies that $u\sigma  \not\preceq v\sigma$ and $u'\theta  \succ v'\theta$ implies that $u'\sigma  \not\preceq v'\sigma$.
If neither $u$ nor $u'$ is a variable, there exists the following \inf{EqFact} inference $\iota'$ from $C$.

    \begin{scprooftree}{1}
        \AxiomC{$C' \orr u' \eq v'  \orr u  \eq v$}
        \RightLabel{}
        \UnaryInfC{$C' \vee v \noteq v' \vee u \eq v \vee \constraints $}       
        \singleLine
    \end{scprooftree}
    
We have $(C' \vee v \noteq v' \orr u \eq v)\theta = \concl(\iota)$, making $\iota$ the $\theta$-ground instance of $\iota'$ as per Definition \ref{def:inf_redundancy}. If either $u$ of $'u$ is a variable there exists the following \inf{VEqFact} inference $\iota'$ from $C$.

    \begin{scprooftree}{1}
        \AxiomC{$C' \orr u' \eq v'  \orr u  \eq v$}
        \RightLabel{}
        \UnaryInfC{$(C' \vee v \noteq v' \vee u \eq v)\sigma $}       
        \singleLine
    \end{scprooftree}

Where $\sigma$ is the most general unifier of $u$ and $u'$. Thus, we can use Lemma \ref{lem:useful} to show that $\concl(\iota')\theta = \concl(\iota)$.
Finally, let $\iota \in \ginfgsel$ be the following \inf{GEqRes} inference with premise in $\G(N)$.

    \begin{scprooftree}{1}
        \AxiomC{$C'\theta \orr s\theta \noteq s'\theta$}
        \RightLabel{}
        \UnaryInfC{$C'\theta$}  
        \singleLine
    \end{scprooftree}

where $s\theta = s'\theta$, $\Ginv(C\theta) = C = C' \orr s \noteq s'$ and $\iota$ fulfils all the side conditions of \inf{GEqRes}.
Let $\sigma$ be any substitution. 
The literal $s\theta \noteq s'\theta$ being eligible with respect to $\gsel$ in $C\theta$ implies that $s \noteq s'$ is eligible in $C$ with respect to $\sel$.
Since $\theta$ is a unifier of $s$ and $s'$, at least one of them must be a variable, or they must share a top symbol.
If $s = s'$, then there exists the following \inf{ReflDel} inference  $\iota'$ from $C$.

    \begin{scprooftree}{1}
        \AxiomC{$C' \vee s \noteq s$}
        \RightLabel{}
        \UnaryInfC{$C'$}  
        \singleLine
    \end{scprooftree}

Otherwise we have two options. If either $s$ (or analogously $s'$) is a variable, then there is the following \inf{Bind} inference  $\iota'$ from $C$.

    \begin{scprooftree}{1}
        \AxiomC{$C' \vee x \noteq s'$}
        \RightLabel{}
        \UnaryInfC{$C'\sigma$}  
        \singleLine
    \end{scprooftree}

Otherwise $s$ and $s'$ must share a top symbol and there is the following \inf{Decompose} inference $\iota'$ from $C$.

    \begin{scprooftree}{1}
        \AxiomC{$C' \vee f(\tuple{s}_n) \noteq f(\tuple{t}_n)$}
        \RightLabel{}
        \UnaryInfC{$C' \orr \constraints$}  
        \singleLine
    \end{scprooftree}

In the first case, we have $\concl(\iota')\theta = \concl(\iota)$. 
In the second case, $\sigma$ is the most general unifier of $s$ and $s'$, so we can use Lemma \ref{lem:useful} to show that $\concl(\iota')\theta = \concl(\iota)$.
In the last case, we have that $C'\theta = \concl(\iota)$.
Thus in all cases, $\iota$ is the $\theta$-ground instance of $\iota'$. \qed
\end{proof}

Using Lemmas \ref{lem:grounding} and \ref{lem:liftable} we can instantiate Theorem \ref{thm:waldmann_lifting} to prove the static refutational completeness of $\infsys$. 
There is a slight issue here, as Theorem \ref{thm:waldmann_lifting} gives us refutational completeness with respect to Herbrand entailment. 
That is $N \models M$ if $\G(N) \models \G(M)$. 
We would like to prove completeness with respect to entailment as defined in Section \ref{sec:tech_background} (known as Tarski entailment).
This issue can easily be resolved by showing that the two concepts are equivalent with regards to refutations which can be achieved in a manner similar to Bentkamp et al. (Lemma 4.19 of \cite{bentkamp-2021-lfhol-journal}). 

\begin{theorem}[Static refutational completeness]
For a set of clauses $N$ saturated up to redundancy by $\infsys$, $\holdsin{N}{\bot}$ if and only if $\bot \in N$.
\end{theorem}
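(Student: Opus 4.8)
The plan is to derive the theorem as a corollary of the lifting result (Theorem~\ref{thm:waldmann_lifting}), whose hypotheses have essentially been discharged already. The ``if'' direction is immediate: if $\bot \in N$ then $N$ is unsatisfiable, so $\holdsin{N}{\bot}$. For the ``only if'' direction, assume $N$ is saturated up to redundancy by $\infsys$ and $\holdsin{N}{\bot}$, and argue that $\bot \in N$.

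First I would assemble the three ingredients needed to invoke the framework. (a) For each $\gsel \in \G(\sel)$, the ground system $\ginfgsel$ is by construction a ground superposition calculus with a fixed selection function, and its static refutational completeness with respect to standard first-order entailment and the standard reduction-based redundancy criterion $(\gredi,\gredc)$ is classical~\cite{bachmair-1990-restrictions}. (b) By Lemma~\ref{lem:grounding}, every $\ggsel$ satisfies properties (G1)--(G3), hence is a grounding function in the sense of~\cite{waldmann-2020-comprehensive}. (c) By Lemma~\ref{lem:liftable}, for the saturated set $N$ there is a $\gsel$ with $\ginfgsel(\ggsel(N)) \subseteq \ggsel(\infsys(N)) \cup \redi(\ggsel(N))$. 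The one caveat, flagged in the text preceding the theorem, is that our non-ground clause redundancy $\redc$ uses the strict subsumption relation as a tiebreaker, which plain Theorem~14 does not accommodate; this is exactly what Theorem~16 of~\cite{waldmann-2020-comprehensive} is for, so I would invoke that version. Its conclusion is that $(\infsys, (\redi,\redc))$ is statically refutationally complete with respect to Herbrand entailment $\models_\G$: whenever a saturated $N$ has $\G(N)$ unsatisfiable, $\bot \in N$.

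It then remains to bridge $\models_\G$ with the Tarski entailment of Section~\ref{sec:tech_background}, i.e. to show that $N$ is Tarski-unsatisfiable iff $\G(N)$ is unsatisfiable; the direction we need is the contrapositive ``$\G(N)$ satisfiable $\Rightarrow$ $N$ Tarski-satisfiable''. Here I would use the fact that the model produced at the ground level for a saturated, $\bot$-free $\G(N)$ is term-generated (a Herbrand interpretation modulo the induced equality congruence), together with the standard observation that a clause $C$ holds in a term-generated interpretation iff all of its ground instances $C\theta$ hold. Since every such $C\theta$ with $C \in N$ lies in $\G(N)$ and hence holds, the model validates $N$ itself, so $N$ is Tarski-satisfiable; this is the argument of Bentkamp et al.\ (Lemma~4.19 of~\cite{bentkamp-2021-lfhol-journal}). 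Contraposing, $\holdsin{N}{\bot}$ in the Tarski sense forces $\G(N)$ unsatisfiable, and completeness with respect to $\models_\G$ then yields $\bot \in N$.

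The substantive work is already contained in Lemmas~\ref{lem:grounding} and~\ref{lem:liftable}, so the residual obstacles are bookkeeping: (i) correctly citing the tiebreaker-aware Theorem~16 rather than Theorem~14, and (ii) the Herbrand/Tarski transfer — in particular checking that the ground model is a congruence on terms and that truth of a non-ground clause really does reduce to truth of all its ground instances. Of these, (ii) is the step most easily mishandled, although for monomorphic first-order logic with equality it is entirely routine.
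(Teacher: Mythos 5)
Your proposal is correct and follows essentially the same route as the paper: instantiate the Waldmann et al.\ lifting theorem (with the Theorem~16 variant to accommodate the subsumption tiebreaker) using Lemmas~\ref{lem:grounding} and~\ref{lem:liftable} together with classical ground superposition completeness, then transfer from Herbrand to Tarski entailment via the term-generated-model argument of Bentkamp et al. The only difference is that you spell out the Herbrand/Tarski bridge in slightly more detail than the paper, which simply cites Lemma~4.19 of~\cite{bentkamp-2021-lfhol-journal}.
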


Theorem 17 of Waldmann et al.'s framework can be used to derive dynamic refutational completeness from static refutational completeness.
We refer readers to the framework for the formal definition of dynamic refutational completeness.

\begin{theorem}[Dynamic refutational completeness]
The inference system $\infsys$ is dynamically refutationally complete with respect to the redundancy criterion $(\redi, \redc)$.
\end{theorem}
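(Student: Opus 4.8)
The plan is to derive dynamic refutational completeness as an immediate corollary of the static refutational completeness just established, using Theorem 17 of Waldmann et al.'s framework \cite{waldmann-2020-comprehensive}: that theorem states that whenever a pair $(\mathit{FInf}, \mathit{Red})$ is statically refutationally complete and $\mathit{Red}$ is a redundancy criterion in the technical sense of the framework, then $(\mathit{FInf}, \mathit{Red})$ is also dynamically refutationally complete. Hence the entire proof reduces to checking that the hypotheses of Theorem 17 are satisfied by $(\infsys, (\redi, \redc))$.

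First I would note that the ground redundancy criterion $(\gredi, \gredc)$ is the familiar redundancy criterion of ground superposition, which is known to be a redundancy criterion in the sense of the framework. Second, Lemma \ref{lem:grounding} shows that each $\ggsel$ (for $\gsel \in \G(\sel)$) is a genuine grounding function; consequently the intersected lifted criterion $(\redi, \redc)$ assembled from the family $\{\ggsel\}$ is itself a redundancy criterion, which is precisely the general result the framework establishes for intersected liftings. The only subtlety, already flagged earlier in this section, is that our notion of non-ground clause redundancy uses the strict subsumption relation $\sqsupset$ as a tiebreaker, which the plain intersected-lifting construction does not supply. This is handled by Theorem 16 of the framework, which extends the construction to allow a well-founded tiebreaker ordering while still yielding a redundancy criterion and preserving refutational completeness; so it is this tiebreaker-augmented version of $(\redi, \redc)$ that we feed into Theorem 17.

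With these observations in place, the conclusion is immediate: the preceding theorem provides static refutational completeness of $(\infsys, (\redi, \redc))$, and Theorem 17 of the framework upgrades this to dynamic refutational completeness with respect to the same criterion. I do not expect a real obstacle here — the argument is essentially bookkeeping in instantiating an off-the-shelf lifting theorem, and the only point demanding care is making sure the tiebreaker version of the criterion (via Theorem 16) is the one supplied to Theorem 17, rather than the bare intersected-lifted criterion.
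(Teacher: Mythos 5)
Your proposal matches the paper's argument: the paper likewise obtains dynamic completeness as an immediate consequence of the static completeness theorem via Theorem 17 of Waldmann et al.'s framework, and it addresses the subsumption tiebreaker issue via Theorem 16 just as you do. Your write-up is simply a more explicit version of the same bookkeeping.
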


\section{Extending to Higher-Order Logic}
\label{sec:extension}

We sketch how the ideas above can be extended to higher-order logic. 
This is ongoing research, and many of the technical details have yet to be fully worked out. 
Here, we provide a (very) informal description and then provide examples. 
The higher-order unification problem is undecidable and there can exist a potentially infinite number of incomparable most general unifiers for a pair of terms \cite{huet-1975-unif-algo}. 
Existing higher-order paramodulation style calculi deal with this issue in two main ways. One method is to abandon completeness and only unify to some predefined depth \cite{steen-2018-higher}. 
Another approach is to produce potentially infinite streams of unifiers and interleave the fetching of items from such streams with the standard saturation procedure\cite{bentkamp-et-al-2019-lhol}.  
Our idea is to solve easy sub-problems eagerly, such as when terms are first-order or in the pattern fragment \cite{nipkow-1993-functional}, and add harder sub-problems as constraints. 
We then utilise dedicated inferences on negative literals to mimic the rules of Huet's well known (pre-)unification procedure \cite{huet-1975-unif-algo}. 
We think that inferences similar to the following two, could be sufficient to achieve refutational completeness.

     \begin{scprooftree}{1}
        \AxiomC{$C' \vee x \, \tuple{s}_n \noteq f \, \tuple{t}_m $}
        \RightLabel{\inf{Imitate}}
        \UnaryInfC{$(C' \vee x \, \tuple{s}_n \noteq f \, \tuple{t}_m) \{ x \rightarrow \lambda \tuple{y}_n. \, f \, \tuple{(z_1 \,  \tuple{y}_n)}_m   \} $}  
        \singleLine
    \end{scprooftree}

     \begin{scprooftree}{1}
        \AxiomC{$C' \vee x \, \tuple{s}_n \noteq f \, \tuple{t}_m $}
        \RightLabel{\inf{Project}}
        \UnaryInfC{$(C' \vee x \, \tuple{s}_n \noteq f \, \tuple{t}_m) \{ x \rightarrow \lambda \tuple{y}_n. \, y_i \, \tuple{(z_1 \,  \tuple{y}_n)}_p   \} $}  
        \singleLine
    \end{scprooftree}

\noindent
In both rules, each $z_i$ is a fresh variable of the relevant type, and  $x \, \tuple{s}_n \noteq f \, \tuple{t}_m$ is selected in $C$.
\inf{Project} has $k \leq n$ conclusions, one for each $y_i$ of suitable type.
We hope that through a careful definition of the selection function, along with the use of purification, we can avoid the need to apply unification inferences to flex-flex literals (negative literals where both sides of the equality have variable heads). 
Moreover, we are hopeful that the calculus we propose can remain complete without the need for inferences that carry out superposition beneath variables such as the \inf{FluidSup} rule of $\lambda$-superpostion \cite{bentkamp-et-al-2019-lhol} and the \inf{SubVarSup} rule of combinatory-superposition \cite{bhayat-reger-2020-comb-sup}.

\begin{example}
Consider the unsatisfiable clause set:

\[ C_1 =  f \, y \, (x \, a) \, (x \, b) \noteq t \qquad C_2 = f \, c \, a \, b \eq t \]

\noindent
A \inf{Sup} inference between $C_1$ and $C_2$ results in clause $C_3 = t\sigma \noteq t\sigma \vee x \, a \noteq a \vee x \, b \noteq b$ where $\sigma = \{ y \rightarrow c \}$. Assume that the literal $x \, a$ is selected in $C_3$. We can carry out either a \inf{Project} step on this literal or an \inf{Imitate} step. The result of a project step is $C_4 = (t\sigma \noteq t\sigma \vee (\lambda z.\,z) \, a \noteq a \vee x \, b \noteq b)\{x \rightarrow \lambda z.\,z \}$. Applying the substitution and $\beta$-reducing results in $C_5 = t\sigma \noteq t\sigma \vee  a \noteq a \vee b \noteq b$ from which it is easy to reach a contradiction.

\end{example}

\begin{example}[Example 1 of Bentkamp et al. \cite{bentkamp-et-al-2019-lhol}]
Consider the unsatisfiable clause set:

\[ C_1 =  f \, a \eq c \qquad C _2  = h \, (y \, b) \, (y \, a) \noteq h \, (g \, (f \, b)) \, (g \, c) \]

\noindent
An \inf{EqRes} inference on $C_2$ results in $C_3 = y \, b \noteq g \, (f \, b) \vee y \, a \noteq g \, c$. 
An \inf{Imitate} inference on the first literal of $C_3$ followed by the application of the substitution and some $\beta$-reduction results in $C_4 = g \, (z \, b) \noteq g \, (f \, b) \vee g \, (z \, a) \noteq g \, c$. 
A further double application of \inf{EqRes} gives us $C_5 = z \, b \noteq f \, b \vee z \, a \noteq c$.
We again carry out \inf{Imitate} on the first literal followed by an \inf{EqRes} to leave us with $C_6 = x \, b \noteq b \vee f \, (x \, a) \noteq c$.
We can now carry out a \inf{Sup} inference between $C_1$ and $C_6$ resulting in $C_7 =  x \, b \noteq b \vee c \noteq c \vee x \, a \noteq a$ from which it is simple to derive $\bot$ via an application of \inf{Imitate} on either the first or the third literal. 
Note, that the empty clause was derived without the need for an inference that simulates superposition underneath variables, unlike in \cite{bentkamp-et-al-2019-lhol}.
\end{example}

\begin{example}[Example 2 of Bentkamp et al. \cite{bentkamp-et-al-2019-lhol}]
Consider the unsatisfiable clause set:

\[ C_1 =  f \, a \eq c \qquad C _2  = h \, (y \, (\lambda x. \, g \, (f \, x)) \, a) \, y \noteq h \, (g \, c) \, (\lambda w \, x. \, w \, x)   \]
An \inf{EqRes} inference on $C_2$ results in $C_3 = y \, (\lambda x. \, g \, (f \, x)) \, a \noteq g \, c \vee y \noteq \lambda w \, x. \, w \, x$. 
Assuming that the second literal is selected,\footnote{Most orderings would select the first literal. In this case, we can still derive a contradiction, but the proof is longer.} an \inf{EqRes} inference results in $C_4 = (y \, (\lambda x.$ $ \, g \, (f \, x)) \, a \noteq g \, c)\{y \rightarrow \lambda w \, x. \, w \, x \}$.
Simplifying $C_4$ via applying the substitution and $\beta$-reducing, we achieve $g \, (f \, a) \noteq g \, c$. 
Superposing $C_1$ onto this clause we end up with $C_5 = g \, c \noteq g \, c$ from which the empty clause can easily be derived.
Note again, that the empty clause has been derived without recourse to a \inf{FluidSup}-like inference.
\noindent

\end{example}

\section{Experimental Results}
\label{sec:results}
We implemented the calculus in the Vampire theorem prover \cite{vampire}. 
We also implemented a variant of the calculus, that utilises fingerprint indices \cite{schulz-2012-fingerprint} to act as an imperfect filter.
The completeness proof indicates that a superposition inference only needs to be carried out when the two terms can \emph{possibly} unify.
Therefore, we store terms in fingerprint indices, which act as fast imperfect filters for finding unification partners, and only carry out superposition inferences with terms returned by the index. 
This restricts, somewhat, the number of inferences that take place, at the expense of some loss of speed.
Thus, it represents a midway path between eager unification and delayed unification.
As a final twist, we implemented a version of the calculus that uses fingerprint indices as well as solving constraint literals of the form $x \noteq t$ (where $x$ is not a subterm of $t$) and $t \noteq t$ eagerly.
Thus, in this version of the calculus there is no need for the \inf{Bind} and \inf{ReflDel} rules. 

We compared each of these approaches with the standard superposition calculus implemented in Vampire.
We refer to the standard calculus as \inf{Vampire} and the delayed inference calculus without fingerprint indices by \inf{Vampire*}. \footnote{Our implementation can be found at \url{https://github.com/vprover/vampire/tree/delayed-unification}. To run the new calculus, use option \texttt{-duc on}. To run the standard calculus, the option \texttt{duc} is set to \texttt{off}. }
We refer to the delayed inference calculus with fingerprint indices by \inf{Vampire\textsuperscript{$\dagger$}}. 
Finally, we refer to the calculus that eagerly solves some constraint literals by \inf{Vampire\textsuperscript{$\ddagger$}}.\footnote{The code for both  \inf{Vampire\textsuperscript{$\dagger$}} and \inf{Vampire\textsuperscript{$\ddagger$}} can be found at branch \url{https://github.com/vprover/vampire/tree/delayed-unif-with-fp}. \inf{Vampire\textsuperscript{$\dagger$}}  was built from commit \texttt{c04a08feb5db3e7468a1fa} and \inf{Vampire\textsuperscript{$\ddagger$}} from commit \texttt{fa2f139302b6a7a6487e73}. Again, option \texttt{-duc on} is required for the new calculi to run.}

We tested these approaches against each other on benchmarks coming from CASC 2023 system competition \cite{casc}.
As our new approach is not currently compatible with higher-order or polymorphic input, we restricted the comparison to monomorphic first-order problems.
Namely, we used the 500 benchmarks in the FNE and FEQ categories. 
These are monomorphic, first-order benchmarks that either include equality (FEQ) or do not contain equality (FNE).
All benchmarks in the set are theorems.
The results can be seen in Table \ref{tab:results}. 
All experiments were run on a node cluster located at The University of Manchester. 
Each node in the cluster is equipped with 192 gigabytes of RAM and 32 Intel\textsuperscript{\textregistered} Xeon\ processors with two threads per core.
Each configuration was given 100s of CPU time per problem and run in single core mode.
\inf{Vampire} was run with options \texttt{-\--mode casc} which causes it to use a tuned portfolio of strategies.
All other variants were run with options \texttt{-\--mode casc -\--forced\_options duc=on} which forces the use of the new calculus on top of the aforementioned portfolio.

\setlength{\tabcolsep}{15pt} 
\begin{table}[]
\centering
\begin{tabular}{@{}lcc@{}}
\hline
  \textbf{Approach} & \textbf{Solved} & \textbf{Uniques}    \\ \hline
  \inf{Vampire} & 430   &  110  \\ \hline 
   \inf{Vampire*} & 238 &    0 \\ \hline
 \inf{Vampire\textsuperscript{$\dagger$}} & 255 &   0  \\ \hline
 \inf{Vampire\textsuperscript{$\ddagger$}} & 322 & 2 \\ \hline
\end{tabular}
\caption{Summary of experimental results}
\label{tab:results}
\end{table}

The calculi based on delayed unification perform badly in comparison to standard superposition. 
This is unsurprising, as syntactic first-order unification is already an efficient process. 
By replacing it with delayed unification, we gain little in terms of time, but pay a heavy penalty in terms of the number
of inferences carried out. 
The use of fingerprint indices helps somewhat in mitigating this issue, but not a great deal.
Eagerly solving trivial constraints shows more promise and is actually able to solve two problems that the standard calculus can not (within the time limit). 
These are the benchmarks \texttt{CSR036+3.p} and \texttt{LAT347+3.p}.

\section{Related Work}
The only other proof calculi that we are aware of that explicitly integrate unification rules at the calculus level, are the higher-order paramodulation calculi \cite{benzmuller-2015-higher,steen-2018-higher} and lazy paramodulation~\cite{lazy}.
However, these calculi are paramodulation calculi and do not incorporate certain concepts of redundancy so crucial to the success of superposition provers.
Moreover, the completeness proofs for these calculi are based on very different techniques to the Bachmair \& Ganzinger style model building proofs commonly employed in the completeness proofs of superposition calculi.

There are other calculi that in some form do represent the folding of unification into the calculus, but the link between the unification rules and the calculus is less clear. 
For example, the recent work by one of the authors of this paper \cite{schoisswohl-2023-ALASCA} relating to reasoning about linear arithmetic, moves theory reasoning relating to a number of equations from the unification algorithm to the calculus level. 
A different example, by another of this paper, is the combinatory-superposition calculus \cite{bhayat-reger-2020-comb-sup} which essentially folds higher-order combinatory unification into the calculus. 
In both cases, the relationship between the unification algorithm and the calculus rules is not obvious.

There are other methods of dovetailing unification with inference rules. 
For example, a unification procedure can be modified to return a stream of results.
This stream can be interrupted in order to carry out further inferences and then returned to later.
This is the approach taken by the higher-order Zipperposition prover \cite{bentkamp-et-al-2019-lhol} in order to handle the infinite sets of unifiers returned by higher-order unification.
Conceptually, this is a very different solution to using constraints, since the intermediate terms created during unification are not available to the entire calculus as they are in our approach. 
Furthermore, from an implementation perspective, streams of unifiers are a far greater departure from the standard saturation architecture than the adding of constraints.
Unification can also be partially delayed by preprocessing techniques such as Brand's modification method and its developments~\cite{cee}.

As mentioned in the introduction, abstraction resembles the basic strategy \cite{bachmair-1992-basic,nieuwenhuis-1992-basic-complete}, 
where unification problems are added to the constraint part of a clause.
Periodically, these constraints can be checked for satisfiability and clauses with unsatisfiable constraints removed.
However, in the basic strategy, the constraints do not interact with the rest of the proof calculus.
Moreover, redundancy of clauses can no longer be defined in terms of ground instances, but only in terms of ground instances that satisfy the constraints.
This significantly affects the simplification machinery of superposition / resolution.

Unification with abstraction was first introduced, to the best of our knowledge, by Reger et al. in \cite{reger-unif-with-abstarction-2018} in the context of theory reasoning.
However, the concept was introduced in an ad-hoc fashion with no theoretical analysis of its impact on the completeness of the underlying calculus.
Recently, the relationship between unification modulo an equational theory and unification with abstraction has been analysed \cite{schoisswohl-2023-ALASCA} and a framework developed linking the two. 
It remains to explore whether the current work can fit into that framework.

\section{Conclusion}

We have developed a first-order superposition calculus that delays unification through the use of constraints, and proved its completeness.
Whilst the calculus does not perform well in practice, we feel that the calculus and its completeness proof form a template that can be followed to prove the completeness of calculi that involve unification procedures more complex than syntactic first-order unification.
For example unification modulo a set of equations $E$. 
Some of the crucial features of our approach are: 
(1) the carrying out of partial unification and adding the remaining unification pairs back as constraints, and (2)
the ignoring of constraint literals in the definition of redundant inference.
In particular, feature (1) may well be crucial in taming issues relating to undecidable unification problems.
For example, in higher-order logic where unification is undecidable, it is common to run unification to a particular depth and then give up if termination has not occurred.
Of course, this harms completeness.
With our approach it should be possible to add the remaining unification pairs back as constraints and maintain completeness.
In the future, we would like to generalise our approach into a framework that can be used to prove the completeness of a variety of calculi as long as the unification problem for the underlying terms meets certain conditions.
We would also like to explore instantiating such a framework to prove the completeness of particular calculi of interest to us such as AC-superposition and higher-order superposition.

\smallskip

\noindent{\bf Acknowledgements.} \\
We  acknowledge funding from 
the ERC Consolidator Grant ARTIST 101002685,
the TU Wien Doctoral College SecInt, and 
the FWF SFB project SpyCoDe F8504.

\bibliographystyle{splncs04}
\bibliography{references}

\end{document}